\newcommand{\todo}[1][\null]{\ensuremath{\clubsuit}}
\newcommand{\lsemioplus}{\mathbin{\mbox{$\lefteqn{\hspace{.77ex}\rule{.4pt}{1.2ex}}{\in}$}}}
\newcounter{tbn}
\newcounter{mcasenum}
\newtheorem{theorem}{Theorem}
\newtheorem{lemma}{Lemma}
\newtheorem*{proposition*}{Proposition}
{\theoremstyle{definition}

}
\begin{document}

\par\noindent {\LARGE\bf
 Lie symmetries of generalized Burgers equations:\\ application to  boundary-value problems
\par}

{\vspace{3mm}\par\noindent\large O.\,O.~Vaneeva$^{\dag 1}$, C. Sophocleous$^{\ddag 2}$ and P.\,G.\,L. Leach$^{\ddag\S 3}$
\par\vspace{5mm}\par}

{\par\noindent\it
${}^\dag$\ Institute of Mathematics of NAS of Ukraine, 3 Tereshchenkivska Str., Kyiv-4, 01601 Ukraine\\[1ex]
${}^\ddag$\ Department of Mathematics and Statistics, University of Cyprus, Nicosia CY 1678, Cyprus\\[1ex]
${}^\S$\ School of Mathematical Sciences, University of KwaZulu-Natal, Private Bag X54001,\\
$\phantom{{}^\S}$\ Durban 4000, South Africa
}
\vspace{3mm}
{\par\noindent
$\phantom{{}^\dag{}\;}\ $E-mail: \it$^1$vaneeva@imath.kiev.ua,
$^2$christod@ucy.ac.cy,
$^3$leach.peter@ucy.ac.cy
\par}

{\vspace{5mm}\par\noindent\hspace*{8mm}\parbox{146mm}{\small
 There exist several approaches exploiting Lie symmetries in the reduction of  boundary-value problems for partial differential equations modelling real-world phenomena to those problems for ordinary differential equations. Using an example of generalized Burgers equations appearing in nonlinear acoustics we show that that the ``direct'' procedure of  solving boundary-value problems using Lie symmetries firstly described by Bluman is more general and straightforward than the method suggested by Moran and Gaggioli in [J Eng Math {\bf 3} (1969), 151--162].
 After the group classification of a class of generalized Burgers equations with time-dependent viscosity is performed we solve an associated
 boundary-value problem using the symmetries obtained.
}\par\vspace{3mm}}


\section{Introduction}

Nonlinear partial differential equations (PDEs) and their systems  often appear in physical sciences and engineering as models describing real-world phenomena.
In applications one is usually interested not in every solution
of a PDE, but in a solution satisfying some additional conditions such as an
initial condition  and/or a boundary condition.

 Lie symmetry methods play an important role in solving nonlinear PDEs providing us with the algorithmic method of Lie reduction. There exist several approaches exploiting Lie symmetries in reduction of boundary-value problems (BVPs) for PDEs to those for ODEs.
The classical technique
is to require that both equation and boundary conditions are left invariant under the action of a one-parameter Lie group
of transformations. Of course the infinitesimal approach is usually applied, i.e., a basis of operators of Lie invariance algebra is used instead of finite transformations from the corresponding Lie symmetry group  (see, e.g.,~\cite[Section 4.4]{Bluman&Anco2002}).
The first works in this direction appeared in the late sixties (see, e.g.,~\cite{Moran&Gaggioli1968,Moran&Gaggioli1969,Bluman&Cole1969,Bluman1974}).
Bluman used the  approach~\cite{Bluman&Cole1969,Bluman1974}, that  generally can be termed as the  ``direct'' one, namely firstly the symmetries of a PDE were derived and then the boundary conditions were checked to determine whether they are also invariant under the action of the generators of symmetry found. In the case of a positive answer the BVP for the PDE was reduced to a BVP for an ODE. Using this technique a number of boundary-value problems were solved (see, e.g.,~\cite{Sophocleous&OHara&Leach2011a,Sophocleous&OHara&Leach2011b,Burde1996,Sophocleous&OHara&Leach2013}).

The method suggested by Moran and Gaggioli in~\cite{Moran&Gaggioli1969} uses specific  one-parameter Lie groups of transformations of the independent and dependent variables of the PDE system as well as of all arbitrary elements which appear in the equations under study and in initial and boundary conditions.
Namely, only the groups of scalings and translations are considered which can lead to self-similar or travelling-wave solutions only. After the admitted Lie group of scalings and/or translations is specified,  the complete set of absolute invariants has to be found. Then a  boundary-value problem for the PDE system is reduced to similar but simpler problem for the ODE system. Such an approach was applied to a number of engineering  problems (see, e.g.,~\cite{AbdelMalek&Helal2011} and references therein).

 There is also the approach in which the group classification of a PDE system and associated  boundary conditions is performed simultaneously (see, e.g.,~\cite{Kovalenko,Kovalenko_dis}). The usage of symmetries in the course of
invariant parameterization or discretization of the system under
consideration is discussed in~\cite{BiPo2012,PoBi2012}. Lie symmetries can also be used for certain cases when boundary conditions themselves are not invariant
with respect to the corresponding Lie group of transformations~\cite{Goard14}.

In this paper we demonstrate that the ``direct'' approach is much easier than that one suggested in~\cite{Moran&Gaggioli1969} and  used, e.g., in~\cite{AbdelMalek&Helal2011}, especially taking into account that problems of group classification  are solved already for wide classes of nonlinear PDEs. To illustrate this we use an example of a generalized Burgers equation of the form
\begin{equation}\label{Eq_GenBurgers}
u_t+a(u^n)_x=g(t)u_{xx},
\end{equation}
where $a$ is a nonzero constant, $g$ is an arbitrary smooth nonvanishing function of $t$ and $n\neq0,1$.

If $n=2$, $a=1/2$, and $g=-\nu$, where $\nu$ is a nonzero constant, equation~\eqref{Eq_GenBurgers} becomes
the prominent Burgers equation,
$
u_t + uu_x + \nu u_{xx} = 0,
$
that is one of the simplest nonlinear $(1 + 1)$ evolution equations that is exactly solvable.  It has a long history as it was already known to Forsyth \cite {Forsyth06} and discussed by Bateman not many years later \cite{Bateman15}.  However, it was a serious contribution made by Burgers which led to its present name \cite{Burgers48}.  Burgers equation has been used to describe many processes in fluid mechanics and a variety of other fields which seem to be rather disparate.  Its remarkable feature is that it can be transformed to the standard heat equation by means of the Hopf--Cole transformation \cite{Hopf50,Cole51}. Therefore it is $C$-integrable~\cite{Calogero1991}.

The generalized Burgers equations~\eqref{Eq_GenBurgers} with $n=2$ and a nonconstant function $g$ were derived in~\cite{HammertonCrighton1989} and describe
 the propagation of weakly nonlinear acoustic waves under
the influence of geometrical spreading and thermoviscous diffusion. Lie symmetries of such equations were studied in~\cite{Doyle&Englefield1990,wafo2004d}. This and other generalizations of the Burgers equations  are discussed, e.g., in~\cite{Sachdev1987,Sachdev2000}.
Recently, a quarter-plane problem for the modified Burgers equations $u_t+u^pu_x=u_{xx}$, $p>1$, was investigated in~\cite{JALeach2013}.

We solve the group classification problem for class~\eqref{Eq_GenBurgers} in the framework of modern group analysis.
This problem  is formulated as follows~\cite{Ovsiannikov1982}: given a class of differential equations, the problem is
to classify all possible cases of extension of Lie invariance algebras of such equations with respect to
the equivalence group of the class.   Then we consider the class of BVPs and solve successfully a specific BVP satisfying a requirement of invariance with respect to Lie symmetries obtained.
In contrast to the work performed in~\cite{AbdelMalek&Helal2011}
we use the ``direct'' approach~\cite{Bluman&Cole1969,Bluman&Anco2002} to solve the equation with associated boundary conditions and show that it is easier to implement and  more transparent.

\section{Equivalence transformations}If two PDEs are connected by a point  transformation, then these equations are called {\it similar}~\cite{Ovsiannikov1982} (it is possible also to consider a similarity up to  contact transformations). Similar PDEs have similar sets of solutions, symmetries, conservation laws and other related information.  Oftentimes equations related in this mathematical way have applications in apparently very distinct areas.  As a simple example, the heat equation found in engineering and the Black-Scholes equation of financial mathematics are equations connected by a point transformation and invariant under the actions of similar representations of the same Lie group \cite{Gasizov98}.
Therefore it is rather important to study point (or even contact) transformations that link equations from a given class of PDEs. Such transformations are called
admissible~\cite{Popovych&Kunzinger&Eshraghi2010} (or form-preserving~\cite{Kingston&Sophocleous1998}, or allowed~\cite{Winternitz92}) ones.
We note that such transformations can be used successfully in the study of integrability~\cite{vane2014b}.

Admissible transformations that
preserve the differential structure of the class and transform only its arbitrary elements
 are called equivalence transformations and form a group.
Notions of different kinds of equivalence group can be found, e.g., in~\cite{vane2012b}.
We were able to study all admissible  transformations in class~\eqref{Eq_GenBurgers}. They appeared to  be exhausted by equivalence once. The results of the study are given in the following statements.
\begin{theorem}
The usual equivalence group~$G^{\sim}$ of class~\eqref{Eq_GenBurgers} comprises the transformations
\[
\begin{array}{l}
\tilde t=\delta_1t+\delta_2,\quad \tilde x=\delta_3x+\delta_4,\quad
\tilde u=\delta_5u, \quad
\tilde a=\dfrac{\delta_3}{\delta_1}\delta_5^{1-n}a, \quad
\tilde g=\dfrac{{\delta_3}^2}{\delta_1} g,\quad \tilde n=n,
\end{array}
\]
where
 $\delta_j,$ $j=1,\dots,5$, are arbitrary constants
with $\delta_1\delta_3\delta_5\not=0$.
\end{theorem}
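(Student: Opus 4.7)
The plan is to apply the direct method for constructing the usual equivalence group. I postulate a general nondegenerate point transformation
\[
\tilde t=T(t,x,u),\qquad \tilde x=X(t,x,u),\qquad \tilde u=U(t,x,u),
\]
together with induced maps of the arbitrary elements $\tilde a=A(a,g,n)$, $\tilde g=G(a,g,n)(\tilde t)$ and $\tilde n=N(a,g,n)$, and require that whenever $u$ solves $u_t+a(u^n)_x=g(t)u_{xx}$ the pulled-back function $\tilde u$ solves the analogous equation in tilded variables. Because $\tilde g$ must depend on $\tilde t$ alone while $g$ depends on $t$ alone, the very first splitting forces $T_x=T_u=0$, so that $T=T(t)$. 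This is a standard but essential preliminary, removing $x$- and $u$-dependence from the new time.

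Next, I expand $\tilde u_{\tilde t}$, $\tilde u_{\tilde x}$, $\tilde u_{\tilde x\tilde x}$ by the chain rule using the Jacobian of the point transformation, substitute $u_t=g u_{xx}-an\,u^{n-1}u_x$ from the source equation to eliminate $u_t$, and treat what remains as a polynomial identity in the parametric derivatives $u_x$ and $u_{xx}$. Collecting coefficients of $u_{xx}^{\,2}$, of $u_xu_{xx}$ and of the pure powers $u_x^{\,k}$ with $k\geq 2$ yields $X_u=0$ and $U_{uu}=0$. Hence $X=X(t,x)$ and $U=\alpha(t,x)u+\beta(t,x)$. At this stage the determining system becomes linear in the derivatives of $u$, with an additional block coming from the nonlinearity $u^{n-1}$.

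The main obstacle is the matching of the nonlinear term, and this is where the assumption $n\neq 0,1$ is decisive. The transformed equation contains $(\alpha u+\beta)^{\tilde n-1}(\alpha u+\beta)_{\tilde x}$, whereas the source equation, after substitution, contributes only a single monomial proportional to $u^{n-1}u_x$. Expanding the binomial and comparing coefficients of distinct powers of $u$, the nondegeneracy coming from $n\neq 0,1$ forces $\beta\equiv 0$ and $\tilde n=n$. Thus $U=\alpha(t,x)u$ and the nonlinearity is preserved pointwise.

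With $T=T(t)$, $X=X(t,x)$ and $U=\alpha(t,x)u$ substituted back, the remaining conditions decouple into elementary relations. The $u_{xx}$-coefficient gives $\tilde g\,X_x^{\,2}=T_t\,g$, whose right-hand side is a function of $t$ only, forcing $X_x$ and $\alpha$ to be independent of $x$; the surviving equations then produce $X_t=0$, $\alpha_t=0$ and $T_{tt}=0$. Integrating yields $T=\delta_1t+\delta_2$, $X=\delta_3x+\delta_4$, $\alpha=\delta_5$, with $\delta_1\delta_3\delta_5\neq 0$ imposed by the nondegeneracy of the Jacobian. Reading off the induced action on the arbitrary elements from the equation itself produces the announced formulas $\tilde a=(\delta_3/\delta_1)\delta_5^{1-n}a$, $\tilde g=(\delta_3^{\,2}/\delta_1)g$, $\tilde n=n$, completing the description of $G^\sim$.
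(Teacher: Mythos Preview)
Your overall strategy---the direct method of postulating a general point transformation, substituting, and splitting the resulting identity---coincides with the paper's. The gap is in the step where you conclude $\beta\equiv 0$.

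You claim that ``the nondegeneracy coming from $n\neq 0,1$ forces $\beta\equiv 0$ and $\tilde n=n$.'' This is false for a fixed $n=2$: with $\tilde n=2$ the factor $(\alpha u+\beta)^{\tilde n-1}=\alpha u+\beta$ is linear in $u$, so its $\beta$-part can be absorbed by the $u$-free pieces of the determining equation (e.g.\ the term $X_t/X_x$), and $\beta$ is \emph{not} forced to vanish. Indeed the paper's Theorem~2 exhibits admissible transformations of the $n=2$ subclass with a genuine affine shift in $u$. What rescues Theorem~1 is that you are computing the \emph{usual} equivalence group: the $(t,x,u)$-transformation is independent of the arbitrary elements and must work for every admissible $n$ simultaneously. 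Picking any $n\neq 2$ in the binomial comparison legitimately yields $\beta=0$, and this then holds for all $n$ because $\beta$ does not depend on $n$. The paper makes this explicit by separating the cases $n\neq 2$ and $n=2$ and reading off Theorem~1 from the former. Your argument never invokes this uniformity in $n$, so the deduction of $\beta=0$ is incomplete as written.

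Two smaller remarks. First, the reason for $T_x=T_u=0$ is not that ``$\tilde g$ must depend on $\tilde t$ alone''; it is the preservation of the evolution form of the equation (the paper simply cites the known restrictions for evolution equations and then $X_u=0$ for equations $u_t=F u_{xx}+G$). Second, the $u_{xx}$-balance is $\tilde g\,T_t=g\,X_x^{\,2}$, not $\tilde g\,X_x^{\,2}=T_t\,g$; this relation fixes $X_x$ as a function of $t$ only but says nothing about $\alpha$, whose $x$-independence comes from a separate determining equation after the $\beta=0$ step.
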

If $n=2$, class~\eqref{Eq_GenBurgers} admits a nontrivial conditional equivalence group which is wider than~$G^{\sim}$.

\begin{theorem}
The generalized equivalence group~$\hat
G^{\sim}_{2}$ of the class,
\begin{equation}\label{Eq_GenBurgers_n2}
u_t+a(u^2)_x=g(t)u_{xx},
\end{equation}
 consists of  the transformations
\[
\begin{array}{l}
\tilde t=\dfrac{\alpha t+\beta}{\gamma t+\delta},\quad \tilde x=\dfrac{\kappa x +\mu_1t+\mu_0}{\gamma t+\delta},\quad
\tilde u=\dfrac{\sigma}{2a(\alpha\delta-\beta\gamma)}\left(2a\kappa(\gamma t+\delta)u-\kappa\gamma x+\mu_1\delta-\mu_0\gamma\right), \\[2ex]
\tilde a=\dfrac{a}{\sigma} \quad \mbox{\rm and} \quad
\tilde g=\dfrac{\kappa^2}{\alpha\delta-\beta\gamma} g,
\end{array}
\]
where
 $\alpha, \beta, \gamma, \delta, \kappa, \mu_1, \mu_0, \sigma$ are constants defined up to a nonzero multiplier,
 $\alpha\delta-\beta\gamma\neq0$ and $\kappa\sigma\not=0$.
\end{theorem}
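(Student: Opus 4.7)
The approach is to compute the full set of admissible point transformations of the subclass~\eqref{Eq_GenBurgers_n2} and to show that they coincide with those claimed. I would start from a general local diffeomorphism $\tilde t=T(t,x,u),$ $\tilde x=X(t,x,u),$ $\tilde u=U(t,x,u)$ of non-zero Jacobian, write out the target equation $\tilde u_{\tilde t}+2\tilde a\,\tilde u\,\tilde u_{\tilde x}=\tilde g(\tilde t)\,\tilde u_{\tilde x\tilde x}$ using the chain rule, and then substitute $u_t=g(t)u_{xx}-2auu_x$ from \eqref{Eq_GenBurgers_n2} so that $t,x,u,u_x,u_{xx}$ become independent variables. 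Because the equation is evolutionary, second order in~$x$ and linear in $u_{xx}$, the standard argument used in the form-preserving-transformations literature referenced in the paper yields the fibre-preserving reductions $T_x=T_u=0$ and $X_u=0$, so that $\tilde t=T(t)$ and $\tilde x=X(t,x)$ with $T_tX_x\neq0$.

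The next step is to split the remaining identity with respect to the jet variables $u_{xx}$, $u_x$ and $u$. The coefficient of $u_{xx}$ yields $\tilde g(T)X_x^{\,2}=g(t)T_t$, which forces $X_x$ to depend only on~$t$ and pins down~$\tilde g$. The coefficients involving $u_xu$ and $u_x$ identify $\tilde a$ in terms of $U_u,T_t,X_x$ and, crucially, force $U$ to be affine in $u$, say $U=U^1(t,x)u+U^0(t,x)$. A further split in powers of~$u$ produces a small overdetermined linear system of PDEs in $(t,x)$ for the four functions $T,X,U^1,U^0$.

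Working through this system, one finds that $X$ is affine in~$x$, $X=\kappa(t)x+\mu(t)$, and that $U^1,U^0$ are likewise affine in~$x$. Substituting back reduces the problem to a handful of ODEs in~$t$. The decisive one is a second-order equation for~$T$ which, thanks to the quadratic nonlinearity, admits non-affine solutions and integrates to the Möbius form $T=(\alpha t+\beta)/(\gamma t+\delta)$; this in turn fixes $\kappa(t)=\kappa/(\gamma t+\delta)$ and $\mu(t)=(\mu_1 t+\mu_0)/(\gamma t+\delta)$, and determines $U$ up to a single overall non-zero constant~$\sigma$. This residual constant is the source of the extra freedom $\tilde a=a/\sigma$; plugging $X_x=\kappa/(\gamma t+\delta)$ back into $\tilde g(T)X_x^{\,2}=g(t)T_t$ then gives $\tilde g=\kappa^2g/(\alpha\delta-\beta\gamma)$, in agreement with the statement.

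I expect the main obstacle to be twofold. First, the bookkeeping in the splitting is delicate: the determining system is not hard individually, but keeping the parameters $\gamma,\kappa$ and $\sigma$ separate (so that the scaling~$\sigma$ is not silently absorbed into the other constants) requires care. Second, and more conceptually, one must pinpoint where the exponent $n=2$ enters: for a generic~$n$ the analogous ODE for~$T$ collapses to $T_{tt}=0$, leaving only the affine group~$G^{\sim}$ of Theorem~1, and it is exactly the quadratic nonlinearity that unlocks the extra Möbius freedom and hence the wider generalized equivalence group~$\hat G^{\sim}_{2}$.
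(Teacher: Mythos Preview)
Your approach is essentially identical to the paper's: both compute all admissible transformations by the direct method, invoke the fibre-preserving restrictions $T=T(t)$, $X=X(t,x)$, split the determining equations to obtain the affine structure $U=\eta^1(t,x)u+\eta^0(t,x)$ and $X=\varphi(t)x+\psi(t)$, and then integrate the residual ODEs $(\varphi^2/T_t)_t=(\varphi_t/T_t)_t=(\psi_t/T_t)_t=0$ to the M\"obius form. One small slip to fix when you write it up: the $u_{xx}$-coefficient actually gives $\tilde g\,T_t=g\,X_x^{\,2}$ (not $\tilde g\,X_x^{\,2}=g\,T_t$), though you still land on the correct final expression for~$\tilde g$.
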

\begin{theorem}
Let two equations from class~\eqref{Eq_GenBurgers}, $u_t+a(u^n)_x =g(t)u_{xx}$ and\,
$\tilde u_{\tilde t}+\tilde a(\tilde u^{\tilde n})_{\tilde x} =\tilde g(\tilde t)\tilde u_{\tilde x\tilde x}$,
be connected by a point transformation $\mathcal{T}$ in the variables~$t$, $x$ and~$u$.
Then
the transformation $\mathcal{T}$ is the projection  on the space $(t,x,u)$ of  a transformation from the group
$G^{\sim}$ if $n\not=2$, or from the group  $\hat G^{\sim}_{2}$, if $n=2$.
\end{theorem}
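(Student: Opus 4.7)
The plan is to apply the direct method for finding admissible (form-preserving) transformations. I posit a general local diffeomorphism $\tilde t=T(t,x,u)$, $\tilde x=X(t,x,u)$, $\tilde u=U(t,x,u)$ with nonvanishing Jacobian, substitute it into the target equation, and require that the resulting expression be a differential consequence of the source equation. The key technical input is that $\mathcal{T}$ is a point transformation, so after using the source equation $u_t=g(t)u_{xx}-a n u^{n-1}u_x$ to eliminate $u_t$, the remaining identity can be regarded as a polynomial in the independent jet coordinates $u_x,u_{xx}$, and each of its coefficients must vanish.

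The first step is to compute $\tilde u_{\tilde t}$, $\tilde u_{\tilde x}$, $\tilde u_{\tilde x\tilde x}$ by the chain rule in terms of partials of $T,X,U$ and of $u_t,u_x,u_{xx}$; after eliminating $u_t$ and collecting, the vanishing of the coefficient of $u_{xx}^{\,2}$ together with that of $u_{xx}u_x$ forces $T_x=T_u=0$ (so $T=T(t)$) and then $X_u=0$ (so $X=X(t,x)$). This is the standard "evolutionary" reduction, and I would execute it first because it sharply narrows the class of admissible transformations.

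With $T=T(t)$ and $X=X(t,x)$, the coefficient of $u_{xx}$ gives $\tilde g(T(t))=g(t)X_x^{\,2}/T_t$, and the coefficient of $u_x^{\,2}$ gives an ODE forcing $U$ to be affine in $u$. The remaining non-derivative terms, after dividing by $\tilde a\tilde n\tilde u^{\tilde n-1}$, must reduce to a power of $u$; comparing powers yields $\tilde n=n$ (so no point transformation can alter the exponent), and determines $\tilde a$ in terms of $a$ and the other coefficients. Integrating the resulting ODEs for $T(t)$, $X(t,x)$ and $U(t,x,u)$ then produces the explicit form of $\mathcal{T}$.

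The main obstacle, and the real content of the theorem, is the bifurcation at $n=2$. For $n\neq 2$ the determining system forces $T$ to be affine in $t$, $X$ to be affine in $x$ with $t$-independent slope, and $U$ to be a constant rescaling of $u$, reproducing exactly the parameters $\delta_1,\dots,\delta_5$ of $G^\sim$. For $n=2$ one combination in the determining system vanishes identically, unlocking additional freedom: $T$ may be a Möbius function of $t$, $X$ may acquire a term linear in $t$, and $U$ may receive an affine correction linear in $x$ and $t$; matching this against the parameters $\alpha,\beta,\gamma,\delta,\kappa,\mu_0,\mu_1,\sigma$ recovers exactly $\hat G^{\sim}_2$. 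Tracking which terms in the overdetermined system disappear only when $n=2$, and verifying that no admissible transformation lies outside $G^\sim\cup \hat G^\sim_2$, is the crux of the argument.
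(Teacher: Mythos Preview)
Your proposal is correct and follows essentially the same direct-method route as the paper: reduce to $T=T(t)$, $X=X(t,x)$, $U$ affine in $u$, read off $\tilde g=gX_x^{\,2}/T_t$ and $\tilde n=n$ from the split determining equations, and then handle the bifurcation at $n=2$ to recover $G^\sim$ versus $\hat G^\sim_2$. The only cosmetic difference is that the paper invokes known results on form-preserving transformations of evolution equations to obtain $T_x=T_u=X_u=0$ immediately, whereas you derive this from the vanishing of the $u_{xx}^{\,2}$ and $u_{xx}u_x$ coefficients.
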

\begin{proof}Suppose that an equation from class~\eqref{Eq_GenBurgers} is connected with an  equation
\begin{equation}\label{Eq_GenBurgers_tilde}
{\tilde u}_{\tilde t}+\tilde a(\tilde u^{\tilde n})_{\tilde x}=\tilde g(\tilde t){\tilde u}_{\tilde x\tilde x}
\end{equation}
 from the same class by a point transformation
$
\tilde t=T(t,x,u),$ $\tilde x=X(t,x,u),$ $\tilde u=U(t,x,u),
$
where $\left|\partial(T,X,U)/\partial(t,x,u)\right|\ne0$.
It is known that for evolution equations we have the restrictions, $T_x=T_u=0$, on the general form of admissible transformations~\cite{Kingston&Sophocleous1998} and moreover for equations of the form~$u_t=F(t,x,u)u_{xx}+G(t,x,u,u_x)$ we necessarily have the condition $X_u=0$~\cite{Popovych&Ivanova2004}.
Therefore it is enough to consider a transformation, $\mathcal T$, of the form
\[
\tilde t=T(t), \quad \tilde x=X(t,x), \quad \tilde u=U(t,x,u),
\]
where $T_tX_xU_u\ne0$.
After we change the variables
in~\eqref{Eq_GenBurgers_tilde}, we obtain an equation in the variables without tildes.
It should be an identity on the manifold~$\mathcal L$ determined by~\eqref{Eq_GenBurgers} in
the second-order jet space~$J^2$ with the independent variables $(t,x)$ and the dependent variable~$u$.
To involve the constraint between variables of~$J^2$ on the manifold~$\mathcal L$,
we substitute the expression of~$u_t$ implied by equation~\eqref{Eq_GenBurgers}.
The splitting of this identity with respect to the derivatives $u_{xx}$ and $u_x$
results in the determining equations for the functions~$T$, $X$ and $U$
\begin{gather}\label{1}
U_{uu}=0,\qquad
\tilde g T_t -g X_x^2=0,\\\label{3}
X_t U_x-X_x U_t+\tilde g T_t{\left(\frac{U_x}{X_x}\right)\!}_x-\tilde a\tilde n T_t U^{\tilde n-1}U_x=0 \quad \mbox{\rm and} \\\label{4}
anu^{n-1}-\tilde a\tilde n\frac{T_t}{X_x}U^{\tilde n-1}+2\tilde g\frac{T_t}{X_x^2}\frac{U_{xu}}{U_u}-\tilde g T_t\frac{X_{xx}}{X_x^3}+\frac{X_t}{X_x}=0.
\end{gather}
Equations~\eqref{1}  imply that
 \[U=\eta^1(t,x)u+\eta^0(t,x),\quad X=\varphi(t)x+\psi(t) \quad\mbox{and}\quad\tilde g=\frac{\varphi^2}{T_t}g.\]
 Here the functions $\eta^i(t,x)$, $i=1,2$, $\varphi(t)$, and $\psi(t)$ are arbitrary smooth functions of their arguments and $\eta^1\varphi\neq0.$

When we use the differential consequences of the fourth equation with respect to $u$, we get that the arbitrary element $n$ is invariant under the action of a point transformation, i.e.,
$\tilde n =n.$
Also we obtain that $\eta^0=0$ for any $n\neq2.$
After we substitute the expressions for $U$, $X$ and~$\tilde g$ into the third and the fourth determining equations, we can split them with respect to $u$.
Further consideration varies depending upon  whether  $n\neq2$ or $n=2$.

\bigskip

{\rm\bf I}. If $n\neq2$, then splitting of~\eqref{3} and~\eqref{4} results in
the equations
\begin{gather*}
\eta^1_x=0, \quad \eta^1(\varphi_tx+\psi_t)+2\varphi g\eta^1_x=0,\\
\varphi\eta^1_t=\eta^1_x(\varphi_tx+\psi_t)+\varphi g\eta^1_{xx}, \quad \mbox{\rm and} \\
\varphi a \eta^1=\tilde a (\eta^1)^nT_t.
\end{gather*}
The general solution of this system is given by
\[T=\delta_1t+\delta_2,\quad\varphi=\delta_3,\quad\psi=\delta_4,\quad \eta^1=\delta_5,\]
where $\delta_j,$ $j=1,\dots,5,$ are arbitrary constants with $\delta_1\delta_3\delta_5\neq0.$ Then $\tilde a=\dfrac{\delta_3}{\delta_1}\delta_5^{1-n}a $ and
$\tilde g=\dfrac{{\delta_3}^2}{\delta_1} g.$

The statement of Theorem~1 is proved.

\bigskip

{\rm\bf II}. If $n=2$, then splitting of equations~\eqref{3} and~\eqref{4} leads to the system
\begin{gather*}
\eta^1_x=0, \quad \eta^1(\varphi_tx+\psi_t)+2\varphi g\eta^1_x-2\tilde a T_t\eta^0\eta^1=0,\\
\varphi\eta^1_t=\eta^1_x(\varphi_tx+\psi_t)+\varphi g\eta^1_{xx}-2\tilde a T_t(\eta^0\eta^1)_x, \\
\varphi\eta^0_t=\eta^0_x(\varphi_tx+\psi_t)+\varphi g\eta^0_{xx}-2\tilde a T_t\eta^0\eta^0_x, \quad \mbox{\rm and}  \\
\varphi a =\tilde a \eta^1T_t.
\end{gather*}
From this system we initially obtain forms of $\eta^1 $ and  $\eta^0$ as
\[\eta^1=\sigma\frac{\varphi}{T_t},\quad\eta^0=\frac{\sigma}{2aT_t}(\phi_tx+\psi_t),\quad \tilde a=\frac{a}{\sigma},\]
where $\sigma$ is a nonzero constant. The remaining equations for the functions $T,$ $\varphi$ and $\psi$ are
\[\left(\frac{\varphi^2}{T_t}\right)_t=0,\quad\left(\frac{\varphi_t}{T_t}\right)_t=0,\quad\left(\frac{\psi_t}{T_t}\right)_t=0.\]
Their general solution can be written as
\[T=\frac{\alpha t+\beta}{\gamma t+\delta},\quad \varphi=\frac{\kappa}{\gamma t+\delta},\quad \psi=\frac{\mu_1 t+\mu_0}{\gamma t+\delta},\]
where $\alpha$, $ \beta$, $\gamma$, $\delta$, $\kappa$, $\mu_1$ and $ \mu_0$ are constants  defined up to a nonzero multiplier,
 $\alpha\delta-\beta\gamma\neq0$ and $\kappa\not=0$.  When we substitute the functions $T,$ $\varphi$ and $\psi$ into the formulae for $\eta^1 $ and  $\eta^0$,
 we obtain exactly the statement of Theorem~2. The group $\hat G^{\sim}_2$ is called generalized since  transformation component for $u$ depends upon an arbitrary element, $a$, of the class.\footnote{Note that, if $a=1/n$, the equivalence group~$\hat G^\sim_2$ was found previously in~\cite{Kingston&Sophocleous1991} (see also~\cite{Pocheketa&Popovych2012,Pocheketa2013}) in course of the study of  form-preserving (admissible)
transformations of the class of generalized Burgers equations,
$u_t+uu_x+f(t,x)u_{xx}=0.$}

We have found that all admissible transformations in class~\eqref{Eq_GenBurgers} are exhausted by those presented in Theorems 1 and 2. Therefore Theorems~1--3 are proven.
\end{proof}

\section{Lie symmetries}

We perform the group classification of class~\eqref{Eq_GenBurgers} within the framework of
the classical Lie approach~\cite{Olver1986,Ovsiannikov1982}.  It is convenient to perform the group classification for class~\eqref{Eq_GenBurgers} up to~$G^\sim$-equivalence and for its subclass~\eqref{Eq_GenBurgers_n2}
up to~$\hat G^\sim_2$-equivalence.

We search for operators of the form $\Gamma=\tau(t,x,u)\partial_t+\xi(t,x,u)\partial_x+\theta(t,x,u)\partial_u$
which generate one-parameter groups of point-symmetry transformations of an equation from class~\eqref{Eq_GenBurgers}.
Any such vector field,~$\Gamma$, satisfies the infinitesimal invariance criterion, i.e.,
the action of the second prolongation,~$\Gamma^{(2)}$, of the operator~$\Gamma$ on equation~\eqref{Eq_GenBurgers}
results in the conditions being an identity for all solutions of this equation. Namely, we require that
\begin{equation}\label{c1a}
\Gamma^{(2)}\{u_t+anu^{n-1}u_x-g(t)u_{xx}\}=0
\end{equation}
identically, modulo equation~\eqref{Eq_GenBurgers}.

The criterion of infinitesimal invariance implies that
\[
\tau=\tau(t),\quad
\xi=\xi(t,x), \quad
\theta=\theta^1(t,x)u+\theta^0(t,x),
\]
where $\tau$, $\xi$, $\theta^1$ and $\theta^0$ are arbitrary smooth functions of their arguments.
The remaining determining equations have the form
\begin{gather}\label{deteq1}
2g\xi_x=(g\tau)_t,\\\label{deteq2}
an\theta^1_xu^{n+1}+an\theta^0_xu^{n}+(\theta^1_t-g\theta^1_{xx})u^2+(\theta^0_t-g\theta^0_{xx})u=0,\\\label{deteq3}
an(\tau_t-\xi_x+(n-1)\theta^1)u^{n+1}+an(n-1)\theta^0u^n+(g\xi_{xx}-2g\theta^1_x-\xi_t)u^2=0.
\end{gather}
It is easy to see from~\eqref{deteq1} that $\xi_{xx}=0.$
The second and the third equations can be split with respect to different powers of $u$. Special cases of splitting arise if $n=0,1,2$.
If $n=0$ or $n=1$, equations~\eqref{Eq_GenBurgers} are linear and are excluded from consideration (Lie symmetries of second-order linear differential equations in two dimensions were studied over the century ago by Lie~\cite{Lie1881}). Therefore we investigate two cases, $n\neq2$ and $n=2$, separately.

{\bf I.} If $n\neq2$, then from~\eqref{deteq2} and~\eqref{deteq3} we immediately obtain that $\theta^1=c_0,$ where $c_0$ is an arbitrary constant, and $\theta^0=0$. Solving the rest of the equations derived from~\eqref{deteq3} after splitting we finally get the forms of $\tau$, $\xi$ and $\theta$,
\[\tau=c_1t+c_2,\quad \xi=(c_1+(n-1)c_0)x+c_3,\quad \theta=c_0u,\]
where $c_i,$ $i=1,\dots,3,$ are arbitrary constants.
Then~\eqref{deteq1} provides
the classifying equation on $g$,
\begin{equation}\label{c1}
(c_1t+c_2)g_t=(c_1+2(n-1)c_0)g.
\end{equation}
Further consideration is performed using {\it the method of furcate splitting} suggested in~\cite{Popovych&Ivanova2004}.
Any operator $\Gamma$ from the maximal Lie invariance algebra $A^{\max}$ equation~\eqref{c1}
gives some equations on~$g$ of the general form
\begin{equation*}
(p\,t+q)g_t=sg,
\end{equation*}
where $p,$ $q,$ and $s$ are constants. In general for all operators from $A^{\max}$
the number $k$ of such independent equations is no greater than 2 otherwise
they form an incompatible system on $g$. There exist three
non-equivalent cases for the value of $k$ given by $k=0,$ $k=1$ and $k=2$.

If $k=0,$ then~\eqref{c1} is identically zero and $c_1=c_2=c_0=0$.
So, if $g$ is arbitrary, we obtain that the kernel of maximal Lie invariance algebras of equations from~\eqref{Eq_GenBurgers} is the
one-dimensional algebra $\langle\partial_x\rangle$.

If~$k=1,$ then $g\in\{\varepsilon e^t,\varepsilon\, t^{\rho}\}\!\!\mod G^{\sim}$, where $\varepsilon=\pm1$ and $\rho\ne0$.
In the exponential case, $g=\varepsilon e^t$, we have
\[\Gamma=2(n-1)c_0\partial_t+((n-1)c_0x+c_3)\partial_x+c_0u\partial_u.\]
If $g=\varepsilon t^\rho$ and $\rho\neq0,$ then
\[\Gamma=c_1t\partial_t+\left(\tfrac12(\rho+1)c_1x+c_3\right)\partial_x+ \tfrac12\displaystyle{\tfrac{\rho-1}{n-1}}c_1u\partial_u.\]
In both these cases the maximal Lie-invariance algebras are two-dimensional with basis operators presented in Cases 2 and 3 of Table~1.

If $k=2$, $g=1\bmod G^\sim.$ The infinitesimal operator takes the form
\[\Gamma=(c_1t+c_2)\partial_t+\left(\tfrac12c_1x+c_3\right)\partial_x- \displaystyle{\tfrac1{2(n-1)}}c_1u\partial_u.\]
So, if $g$ is a constant, the maximal Lie invariance algebra of~\eqref{Eq_GenBurgers} with $n\neq2$ is three-dimensional
spanned by operators presented in Case 4 of Table~1.
Therefore we have proven the following statement.

\begin{theorem}
The kernel of the maximal Lie invariance algebras of equations from class~\eqref{Eq_GenBurgers} with $n\neq2$
coincides with the one-dimensional algebra $\langle\partial_x\rangle$.
All possible $G^\sim$-non-equivalent cases of extension of the maximal Lie invariance algebras are exhausted
by the cases 2--4 of Table~1.
\end{theorem}
\begin{table}[t!]\small\begin{center}\refstepcounter{table}\renewcommand{\arraystretch}{1.75}
\textbf{Table~\thetable.} Group classification of the class~$u_t+a(u^n)_x=g(t)u_{xx}$, $n\neq0,1$.
\\[2ex]
\begin{tabular}{|c|c|c|l|}\hline
no. &\hfil $n$& $g$ & \hfil Basis operators of $A^{\rm max}$ \\
\hline
1 & $\neq2$&$\forall$ & $\partial_x$
\\
\hline
2 & $\neq2$&$\varepsilon t^\rho$ & $\partial_x,\quad
2t\partial_t+(\rho+1)x\partial_x+ \displaystyle{\frac{\rho-1}{n-1}}u\partial_u$
\\
\hline
3  & $\neq2$& $\varepsilon e^t$ & $\partial_x,\quad 2\partial_t+x\partial_x+\frac1{n-1}u\partial_u$
\\
\hline
4 & $\neq2$ & $1$ & $\partial_x,\quad\partial_t,\quad
2t\partial_t+x\partial_x-\frac1{n-1}u\partial_u$
\\
\hline
5 & $2$&$\forall$ & $\partial_x,\quad t\partial_x+\partial_u$
\\
\hline
6 & $2$&$\varepsilon t^\rho$ & $\partial_x,\quad t\partial_x+\partial_u,\quad
2t\partial_t+(\rho+1)x\partial_x+(\rho-1)u\partial_u$
\\
\hline
7  & $2$& $\varepsilon e^t$ & $\partial_x,\quad t\partial_x+\partial_u,\quad 2\partial_t+x\partial_x+u\partial_u$
\\
\hline
8  & $2$& $\varepsilon e^{2\rho\arctan t}$ & $\partial_x,\quad t\partial_x+\partial_u,\quad(t^2+1)\partial_t+(t+\rho)x\partial_x+(x+(\rho-t)u)\partial_u$
\\
\hline
9 & $2$& $1$ & $\partial_x,\quad t\partial_x+\partial_u,\quad\partial_t,\quad 2t\partial_t+x\partial_x-u\partial_u, \quad t^2\partial_t+tx\partial_x+(x-tu)\partial_u$
\\
\hline
\end{tabular}
\end{center}
Here $\varepsilon=\pm1\bmod G^\sim$   and $\rho$ is a nonzero constant. In all cases $a=1/n\bmod  G^\sim$.
In Case~6 we can set, $\bmod\hat G^\sim_{2}$, either $\rho>0$ or $\rho<0$.
\end{table}
{\bf II.} If $n=2$, then splitting of~\eqref{deteq2} and~\eqref{deteq3} results in the system
\begin{gather*}
\theta^1_x=0,\quad 2a\theta^0_x+\theta^1_t=0,\quad \theta^0_t-g\theta^0_{xx}=0,\\
\tau_t-\xi_x+\theta^1=0,\quad 2a\theta^0-\xi_t=0.
\end{gather*}
The general solution of this system is
\begin{gather*}\tau=c_2t^2+c_1t+c_0,\quad \xi=\left(c_2t+\frac12c_1+c_5\right)x+2ac_3t+c_4,\\
\theta^1=-c_2t-\frac12c_1+c_5,\quad \theta^0=\frac1{2a}c_2x+c_3,
\end{gather*}
where $c_i$, $i=0,\dots,5,$ are arbitrary constants.
The classifying equation~\eqref{deteq1} takes the form
\begin{equation}\label{c2}
(c_2t^2+c_1t+c_0)g_t=2c_5\,g.
\end{equation}
For any operator $\Gamma$ from the  maximal Lie invariance algebra $A^{\max}$ equation~\eqref{c2}
gives some equations for~$g$ of the general form
\begin{equation}\label{Eqn2clas}
(p\,t^2+q\,t+r)g_t=s\,g,
\end{equation}
where $p,$ $q,$ $r,$ and $s$ are constants.  As in the previous case
the number $k$ of such independent equations is no greater than 2 otherwise
they form an incompatible system on $g$. So three
non-equivalent cases for the value of $k$ should be considered, namely $k=0,$ $k=1$ and $k=2.$

If $k=0,$ then~\eqref{c2} is identically zero and $c_0=c_1=c_2=c_5=0$.
So, if $g(t)$ is arbitrary, we obtain that the kernel of the maximal Lie invariance algebras of equations from~\eqref{Eq_GenBurgers_n2} is the
two-dimensional algebra $\langle\partial_x,\,2at\partial_x+\partial_u\rangle$.

If $k=1$, the following statement is true.
\begin{lemma}\label{LemmaOntransOfCoeffsOfClassifyingSystem2}
Up to $\hat G^{\sim}_{2}$-equivalence
the parameter quadruple~$(p,q,r,s)$ can be assumed to belong to the set
\[
\{(0,1,0,\bar s),\ (0,0,1,1),\ (1,0,1,s')\},
\]
where $\bar s$, $s'$ are nonzero constants, $\bar s>0$.
\end{lemma}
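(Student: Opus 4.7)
The plan is to read the lemma as an orbit computation: the quadruple $(p,q,r,s)$ labels the classifying equation~\eqref{Eqn2clas}, and the generalised equivalence group $\hat G^{\sim}_{2}$ (identified in Theorem~2) acts on it through its action on $t$ and $g$; both $(p,q,r)$ and $s$ are defined only up to a common nonzero rescaling, so the problem is really one in a projectivisation of the parameter space. The first step is to compute the induced action. Writing the Möbius component as $\tilde t=(\alpha t+\beta)/(\gamma t+\delta)$ with $\Delta=\alpha\delta-\beta\gamma\neq 0$ and recalling $\tilde g=(\kappa^2/\Delta)g$, the chain rule gives $\tilde g_{\tilde t}=(\kappa^2/\Delta^2)(\gamma t+\delta)^2 g_t$, so substituting into the tilded version of~\eqref{Eqn2clas} and invoking the original equation yields
\[
(\tilde p\tilde t^2+\tilde q\tilde t+\tilde r)(\gamma t+\delta)^2\, s \;=\; \tilde s\,\Delta\,(pt^2+qt+r)
\]
as an identity of polynomials in $t$. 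This exhibits the induced action on $[(p,q,r)]$ as the classical $PGL(2,\mathbb{R})$-action on projective binary quadratic forms, coupled to a determined rescaling of $s$.

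The second step is to invoke the standard orbit structure of real binary quadratic forms: modulo $PGL(2,\mathbb{R})$ and scalar multiplication, a nonzero real binary quadratic is classified by the sign of the discriminant $q^2-4pr$. Choosing a Möbius that sends two distinct real roots to $\{0,\infty\}$, a double real root to $\infty$, or a conjugate pair of complex roots to $\{i,-i\}$ reduces $(p,q,r)$ to one of the canonical triples $(0,1,0)$, $(0,0,1)$, and $(1,0,1)$, which account for the three slots appearing in the lemma.

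The third and most intricate step is to normalise $s$ in each case using the residual freedom, which consists of the $PGL(2,\mathbb{R})$-stabiliser of the chosen quadratic together with overall projective rescaling of $(p,q,r,s)$. For $(0,1,0)$, the dilations $t\mapsto\lambda t$ preserve $s$ while the inversion $t\mapsto 1/t$ flips its sign, so one can enforce $\bar s>0$ but the magnitude stays invariant, giving $(0,1,0,\bar s)$ with $\bar s>0$. For $(0,0,1)$, the affine maps $t\mapsto\lambda t+\mu$ send $s\mapsto s/\lambda$, so $|s|$ can be reduced to $1$, and the reflection $t\mapsto -t$ then aligns the sign to produce $(0,0,1,1)$. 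For $(1,0,1)$ the $SO(2)$-stabiliser of $\{\pm i\}$ fixes $s$, so only the nonzero value of $s$ survives as an invariant, giving $(1,0,1,s')$ with $s'\neq 0$. The main obstacle is precisely this third step: one must carefully track how the factors $\Delta$, $(\gamma t+\delta)^2$, and the overall projective scale interact in each normalised case, so that genuinely inequivalent values of $s$ are not conflated while redundant ones are eliminated.
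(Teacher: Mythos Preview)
Your proposal is correct and follows essentially the same route as the paper's own proof. Both arguments compute the action of $\hat G^{\sim}_2$ induced on the quadruple $(p,q,r,s)$, observe that on the triple $(p,q,r)$ this is the $PGL(2,\mathbb{R})$-action on binary quadratic forms so that the sign of the discriminant $q^2-4pr$ separates the three canonical triples $(0,1,0)$, $(0,0,1)$, $(1,0,1)$, and then use the residual stabiliser together with the overall projective scale to normalise $s$; the only difference is that the paper writes out the coefficient formulas for $(\tilde p,\tilde q,\tilde r,\tilde s)$ explicitly and names concrete Möbius maps in each case, whereas you phrase the same steps in the language of roots and stabilisers.
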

\begin{proof}
Combined with multiplication by a nonzero constant,
each transformation from the equivalence group~$\hat G^{\sim}_{2}$ can be  extended to the coefficient quadruple
of equation~\eqref{Eqn2clas} as
\begin{gather*}
\begin{array}{l}
\tilde p=\nu(p\delta^2-q\gamma\delta+r\gamma^2),\quad
\tilde q=\nu(-2p\beta\delta+q(\alpha\delta+\beta\gamma)-2r\alpha\gamma),
\\[1ex]
\tilde r=\nu(p\beta^2-q\alpha\beta+r\alpha^2),
\quad\tilde s=\nu s\Delta.
\end{array}
\end{gather*}
Here $\Delta=\beta\gamma-\alpha\delta$ and $\nu$ is an arbitrary nonzero constant.

There are only three $\hat G^\sim_2$-non-equivalent values of the triple $(p,q,r)$
depending upon the sign of $D=q^2-4pr$,
\begin{gather*}
(0,1,0)\quad\mbox{if}\quad D>0, \quad
(0,0,1)\quad\mbox{if}\quad  D=0 \quad \mbox{\rm and} \quad
(1,0,1)\quad\mbox{if}\quad  D<0.
\end{gather*}
Indeed, if $D>0$, then there exist two linearly independent pairs
$(\delta,\gamma)$ and $(\alpha,\beta)$
such that $p\delta^2-q\gamma\delta+r\gamma^2=0$ and $p\beta^2-q\alpha\beta+r\alpha^2=0$.
For these values of the constants, $\alpha$, $\beta$, $\gamma$ and~$\delta$, we have $\tilde p=\tilde r=0$.
The coefficient $\tilde q$ is necessarily nonzero and it
can be scaled to~$1$ using multiplication by an appropriate value of~$\nu$. Certain freedom in varying group parameters is preserved even after fixing  of the form
of the triple $(p,q,r)$ and $(\tilde p,\tilde q,\tilde r)$.
This allows us to set a constraint  for the coefficient~$\tilde s$. Thus the transformation $\tilde t=1/t$ alternates the sign of $\tilde s$ so that it can be assumed to be positive one.

In the case $D=0$ we choose values of $\alpha$, $\beta$, $\gamma$ and $\delta$
for which $p\delta^2-q\gamma\delta+r\gamma^2$ and
the pair $(\delta,\gamma)$ is not proportional to the pair $(\alpha,\beta)$.
Then we obtain that $\tilde p=0$ and
$\tilde q=\nu\beta(q\gamma-2p\delta)+\nu\alpha(\delta q-2r\gamma)=0$.
An appropriate choice of the pair $(\alpha,\beta)$ allows us to set $\tilde r=1$. Then the residual constant $\tilde s$ can be scaled to one by choice of $\nu.$

If $D<0$, we have $pr\ne0$ and can set $p>0$.
We always can set $\tilde p=\tilde r=1$ and $\tilde q=0$, e.g., this gauge can be made by the transformation
$\tilde t=(2pt+q-\sqrt{4pr-q^2})/(2pt+q+\sqrt{4pr-q^2}).$ In this case the constant $\tilde s$ cannot  be scaled.
\end{proof}
Therefore, up to $\hat G^\sim_2$-equivalence, we have three cases of $g$ which provide extension of the Lie symmetry algebra by one basis operator.
These are Cases 6--8 of Table~1.

If $k=2$ and $g=1\bmod \hat G^\sim_2$,  we get a five-dimensional Lie symmetry algebra which is $\mathfrak{sl}(2,\mathbb{R})\lsemioplus 2A_1$ (Case 9 of Table~1).

The arbitrary element  $a$ does not affect the results of the group classification problem and can be scaled to any fixed nonzero constant value by the transformations
from the usual equivalence group~$G^\sim$. It is convenient to perform the gauge $a=1/n.$

Therefore we have proven the following statement.
\begin{theorem}
The kernel of the maximal Lie invariance algebras of equations from class~\eqref{Eq_GenBurgers_n2}
coincides with the two-dimensional Abelian algebra $\langle\partial_x,\,2at\partial_x+\partial_u\rangle$.
All possible $\hat G^\sim_2$-non-equivalent cases of extension of the maximal Lie invariance algebras are exhausted
by the cases 6--9 of Table~1.
\end{theorem}

\section{Solution of a boundary-value problem using Lie symmetries}
We consider the class of BVPs
\begin{eqnarray}\label{BV_Eq_GenBurgers}
&&u_t+a(u^n)_x=g(t)u_{xx},~~~  x\in [0,+\infty),~t>0, \nonumber\\
&&\lim_{t\rightarrow+\infty} u(t,x)=0,~~~x \in (0,+\infty),\\
&&u(t,0)=q(t),~~~t>0, \nonumber\\
&&\lim_{x\rightarrow+\infty}u(t,x)=0,~~~t>0,\nonumber
\end{eqnarray}
where $a$ is a nonzero constant, $g$ and $q$ are arbitrary smooth nonvanishing functions and $n\neq0,1$ and search those for which
the ``direct'' approach suggested by Bluman~\cite{Bluman&Cole1969,Bluman&Anco2002} is applicable.

We have derived the Lie symmetries for the variable coefficient equation  (\ref{Eq_GenBurgers}) and now we examine which of these symmetries leave the initial and boundary conditions of the problem~(\ref{BV_Eq_GenBurgers}) invariant. The procedure starts by assuming a general symmetry of the form
\begin{equation}\label{general_symmetry}
\Gamma =\sum_{i=1}^m\alpha_i\Gamma_i,
\end{equation}
where $m$ is the number of basis operators of maximal Lie symmetry algebra of a given PDE and  $\alpha_i,~i = 1,\dots,m$, are constants to be determined.

Lie symmetries for equation (\ref{Eq_GenBurgers}) appear in Table 1. In Case 2, for which $g(t)=\varepsilon t^{\rho}$, the generator (\ref{general_symmetry}) takes the form
\[
\Gamma=\alpha_1\partial_x+\alpha_2\Big(2t\partial_t+(\rho+1)x\partial_x+\frac{\rho-1}{n-1}u\partial_u\Big).
\]
Application of $\Gamma$ to the first boundary condition which is written as
$
x=0$ and $u(t,0)=q(t)
$
gives
\[
\alpha_1=0\quad\mbox{ and }\quad\alpha_2\left(-2t\frac{dq}{dt}+\frac{\rho-1}{n-1} q \right)=0.
\]
For nonzero $\alpha_2$ we have
\[
q(t)=\gamma t^{\frac{\rho-1}{2n-2}},
\]
where $\gamma>0$ is a constant.
It can be shown that the symmetry $\Gamma$  with $\alpha_1=0$ leaves invariant the other boundary conditions. Hence the admitted Lie symmetry can be used to reduce BVP~(\ref{BV_Eq_GenBurgers}) to a problem with the governing equation being an ordinary differential equation. In fact the Lie symmetry $2t\partial_t+(\rho+1)x\partial_x+\left((\rho-1)/(n-1)\right)u\partial_u$ produces the transformation 
\begin{equation}\label{ansatz}
u=t^{\frac{\rho-1}{2n-2}}\phi(\eta),\quad\mbox{where}\quad\eta=xt^{-\frac{\rho+1}2},
\end{equation}
that reduces (\ref{BV_Eq_GenBurgers}) into the BVP for ODE
\begin{eqnarray}\label{redeq1}
&&2\varepsilon\phi''+(\rho+1)\eta\phi'-2a(\phi^{n})'-\frac{\rho-1}{n-1}\phi=0,\quad \eta\in[0,+\infty), \\\label{redeq2}
&&\phi(0)=\gamma, \\\label{redeq3}
&&\lim_{\eta\rightarrow+\infty} \phi(\eta)=0.
\end{eqnarray}

Let $\rho=(2-n)/n$.  Then~\eqref{redeq1} takes the form
$\varepsilon\phi''+(\eta\phi'+\phi)/n-a(\phi^{n})'=0$
and can be integrated once to give $\varepsilon\phi'+\eta\phi/n-a\phi^{n}+c=0,$
where $c$ is an integration constant.
When we set $c=0$, this equation becomes the Bernoulli equation that is linearizable by the substitution $\phi^{1-n}=z$ to the form \[\frac{\varepsilon}{1-n}z'+\frac1n\eta z-a=0.\]  The general solution of this equation is
\[z=e^{-\frac{1-n}{2n\varepsilon}\eta^2}\left(C+ \frac{a(1-n)}{\varepsilon}\int^{\eta}_0e^{\frac{1-n}{2n\varepsilon}\theta^2}{\rm d}\theta\right),\]
where  $C$ is an arbitrary constant.
If $\varepsilon n(n-1)>0$, the solution can be written in terms of the error function as
\[z=e^{\frac{\eta^2}{\sigma^2}}\left(C+ \frac{a(1-n)\sqrt{\pi}}{2\varepsilon\sigma}\operatorname{erf}(\sigma\eta)\right),\quad\mbox{where}\quad \sigma=\sqrt{\frac{n-1}{2\varepsilon n}},\quad {\rm erf}(\theta)=\frac{2}{\sqrt{\pi}}\int_0^{\theta}\!e^{-s^2}\!{\rm d}s.\]
Therefore a particular solution of the second-order ODE on the function $\phi$ is
\begin{equation}\label{sol_phi}
\phi=\begin{cases}e^{-\frac{1}{2\varepsilon n}\eta^2}\left(C+ \frac{a(1-n)}{\varepsilon}\int^{\eta}_0e^{\frac{1-n}{2n\varepsilon}\theta^2}{\rm d}\theta\right)^{\frac1{1-n}},\quad \mbox{if}\quad \varepsilon n(n-1)<0,\\
e^{-\frac{1}{2\varepsilon n}\eta^2}\left(C+ \frac{a(1-n)\sqrt{\pi}}{2\varepsilon\sigma}\operatorname{erf}(\sigma \eta)\right)^{\frac1{1-n}}, \quad\mbox{if}\quad \varepsilon n(n-1)>0,
\end{cases}
\end{equation}
where $\sigma=\sqrt{(n-1)/(2\varepsilon n)}.$
This is the solution of  BVP~\eqref{redeq1}--\eqref{redeq3} with $\rho=(2-n)/n$, when $C=\gamma^{1-n}$ and $\varepsilon n>0$. Its typical behaviour is shown on Figure~1.
\begin{figure}[t!]
\centering
\includegraphics[width=75mm]{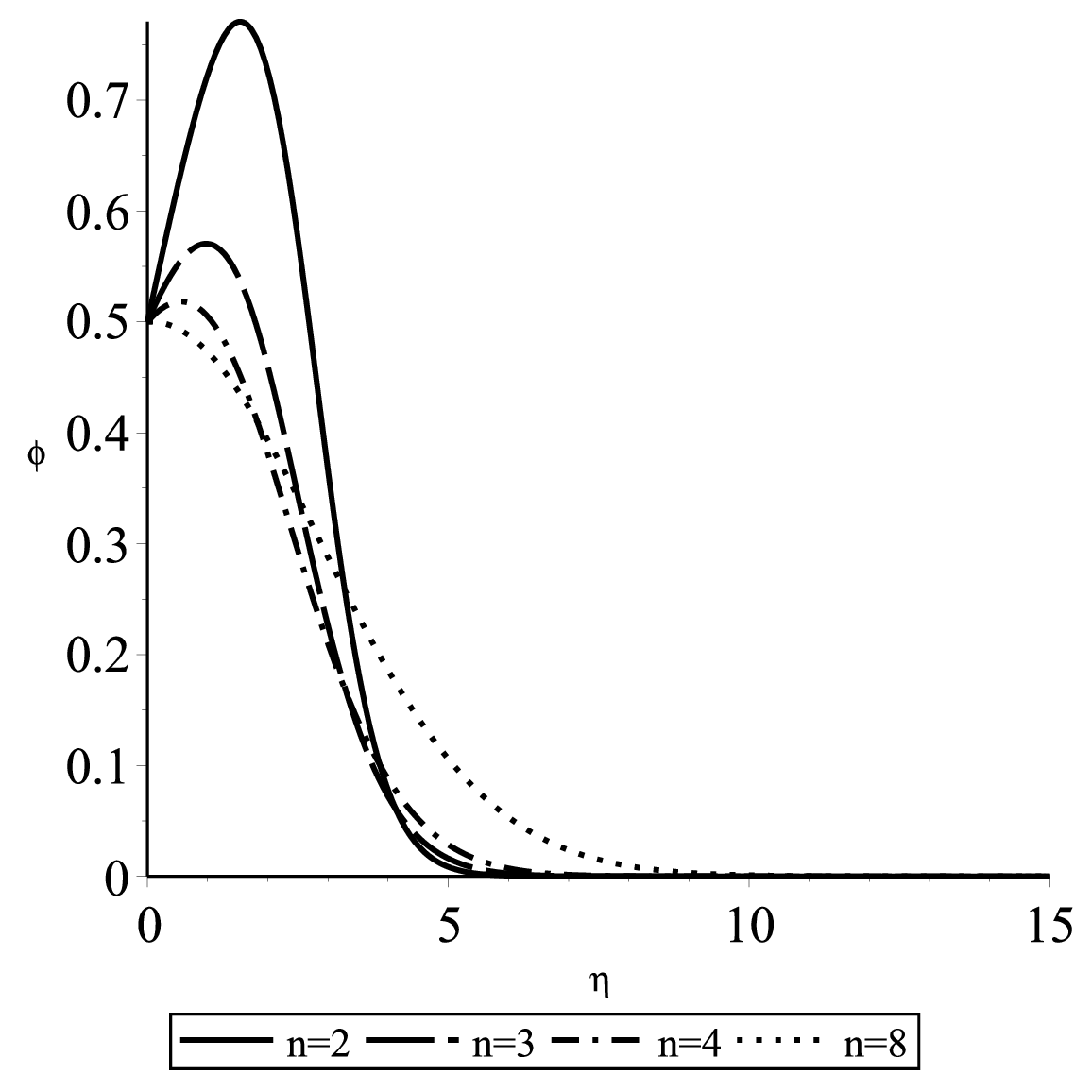}
\caption{ Solution~\eqref{sol_phi} for $\varepsilon=1$, $\gamma=0.5$, $a=1$ and various $n$.}
\end{figure}

Now we use~\eqref{ansatz} and obtain the solution of the following BVP
\begin{eqnarray}\label{BV_Eq_GenBurgers1}
&&u_t+a(u^n)_x=\varepsilon t^{\frac{2-n}n}u_{xx},~~~ x\in [0,+\infty),~t>0, \nonumber\\
&&\lim_{t\rightarrow+\infty} u(t,x)=0,~~~ x\in (0,+\infty),\\
&&u(t,0)=\gamma t^{-\frac1n},~~~t>0, \nonumber\\
&&\lim_{x\rightarrow+\infty}u(t,x)=0,~~~t>0.\nonumber
\end{eqnarray}
For $\varepsilon>0$ and $n>1$ the solution has the form
\begin{equation}\label{sol_u}
u=t^{-\frac1n}\exp\left[-\frac{1}{2\varepsilon n}x^2 t^{-\frac2n}\right]\left(\gamma^{1-n} +\frac{a(1-n)\sqrt{\pi}}{2\varepsilon\sigma}\operatorname{erf}(\sigma xt^{-\frac{1}n})\right)^{\frac1{1-n}},\quad \sigma=\sqrt{\frac{n-1}{2n\varepsilon}}.\end{equation}

Note that the solution satisfies BVP~\eqref{BV_Eq_GenBurgers1} for all values of positive $\gamma$ if $a<0$. If $a>0$, then the parameters have to satisfy the inequality  $\gamma^{1-n}>a(n-1)\sqrt{\pi}/(2\varepsilon\sigma).$
The typical behaviour of the latter solution is shown on Figures~2 and 3 for the values $n=3$ and $n=8$, respectively.

\begin{figure}[h!]
\begin{minipage}[t]{75mm}
\centering
\includegraphics[width=72mm]{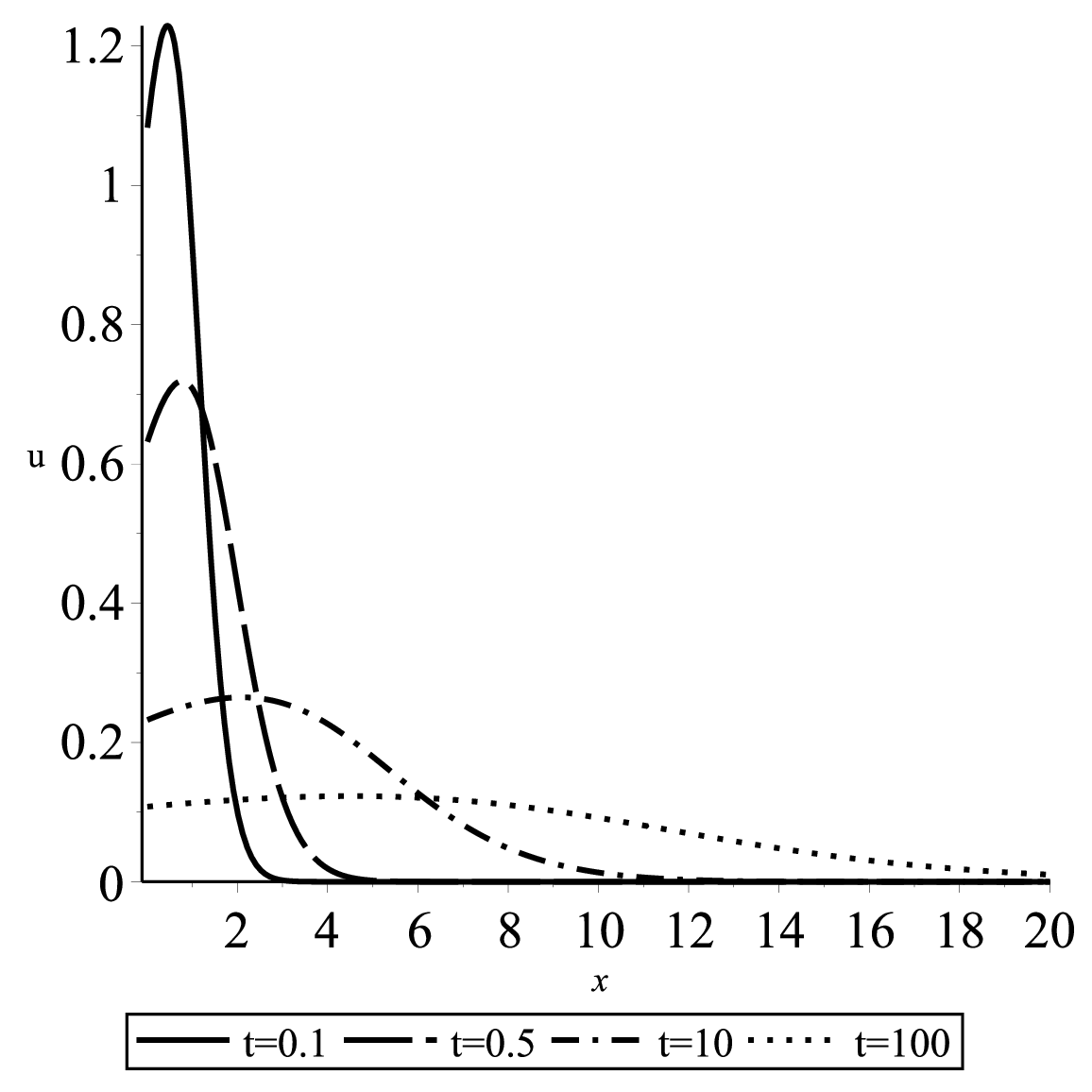}
\caption{Solution~\eqref{sol_u} for $\varepsilon=1$, $\gamma=0.5$, $a=1$ and $n=3$ (evolution in time).}
\end{minipage}
\quad
\begin{minipage}[t]{75mm}
\centering
\includegraphics[width=72mm]{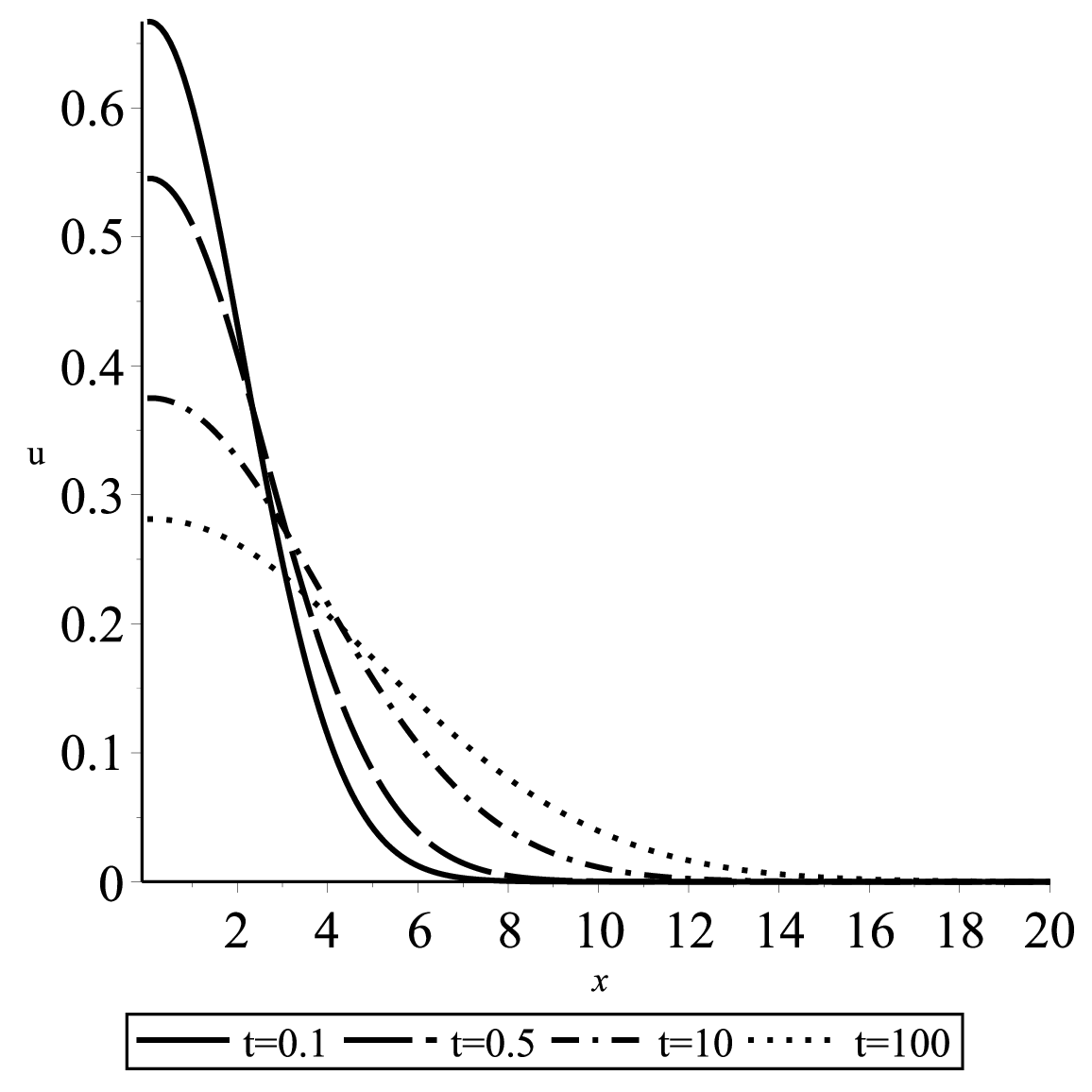}
\caption{Solution~\eqref{sol_u} for $\varepsilon=1$, $\gamma=0.5$, $a=1$ and $n=8$ (evolution in time).}
\end{minipage}
\end{figure}

The above procedure can be applied to the remaining cases that appear in Table 2. If we omit Case 9 which is the well-known Burgers equation,  and constant coefficient Case~4, only in Case 6 there exists a Lie symmetry that leaves the boundary and initial conditions invariant. However, the results for this case can be obtained from the above by setting $n=2$ and BVP~\eqref{BV_Eq_GenBurgers1} reduces to one with a constant coefficient governing equation.

\section*{Conclusion}

One performing research in the fields of engineering or physical sciences often encounters the problem of solving boundary-value problems (BVPs) for nonlinear partial differential equations.
Of course it is important to choose the method for solution which is easier to implement and which leads to more general results than others.
Some of the analytical methods are based on the usage of Lie symmetry groups.
In this paper we have applied  the classical ``direct'' technique involving Lie symmetries of PDEs~\cite{Bluman&Anco2002} to the class of BVPs for generalized Burgers equation with time-dependent viscosity coefficient and have solved the particular subcase. We have the opinion that the used approach is more straightforward  than the one suggested in~\cite{Moran&Gaggioli1969}. One more disadvantage of the latter technique is that it uses only
scalings and translations. Lie symmetry groups of some BVPs are wider and are not exhausted by scalings and translations only (see, e.g.,~\cite{Kovalenko}).
So, the ``direct'' approach is also more general. However, as we have seen, the present method is applicable only for specific forms of $g(t)$ and $q(t)$ in (14). In other words, the method has its limitations, but nevertheless is applied to nonlinear problems. Some recent examples of its successful usage can be found in~\cite{vane2014a,vane2014c}.

To take advantage of Lie symmetry method we firstly perform
the group classification  for the class of variable-coefficient  generalized Burgers equations~\eqref{Eq_GenBurgers} in the framework of modern group analysis. As a preliminary step we  investigate equivalence transformations within the class. It is shown that the equivalence group of the subclass of class~\eqref{Eq_GenBurgers} singled out by the condition $n=2$ is wider than the equivalence group of the whole class. Therefore the group classification list is presented in Table~1 up to $G^\sim$-equivalence for equations~\eqref{Eq_GenBurgers} with $n\neq2$ and up to $\hat G^\sim_2$-equivalence for those with $n=2$. The equivalence transformations have allowed us to write the classification list in a compact and a convenient form for further usage.
The  list of Lie symmetries for the class \eqref{Eq_GenBurgers} comes to complete the existing results that appear in the literature~\cite{Doyle&Englefield1990,wafo2004d} and presents all non-equivalent cases for which the algorithmic Lie reduction method can be applied.

\bigskip

{\bf Acknowledgements. }{\small
The authors thank the five referees for their constructive suggestions for the improvement of
this paper.
OV is grateful for the hospitality and financial support by the University of Cyprus and to Roman Popovych and Sergii Kovalenko for useful comments.  PGLL thanks the University of Cyprus for its kind hospitality and the University of KwaZulu-Natal and the National Research Foundation of South Africa
for their continued support.}

\end{document}